\documentclass[smallextended]{svjour3}
\smartqed{}
\usepackage{graphicx}

\usepackage{lineno,hyperref}

\usepackage{cite}
\usepackage{amssymb,amsopn,tensor,amsmath}

\newcommand{\oldcomma}{,}
\catcode`\,=13
\newcommand{,}{%
\ifmmode%
  \oldcomma\allowbreak%
\else%
  \oldcomma%
\fi%
}

\modulolinenumbers[5]

\DeclareMathOperator*{\argmin}{arg\,min}
\DeclareMathOperator*{\E}{E}
\DeclareMathOperator*{\var}{Var}
\DeclareMathOperator*{\normal}{\Phi}

\DeclareMathOperator*{\Prop}{P}
\DeclareMathOperator*{\oo}{o}
\DeclareMathOperator*{\sign}{sign}
\DeclareMathOperator{\expon}{e}
\DeclareMathOperator{\f}{f}

\begin{document}

\title{Convex method for selection of fixed effects in high-dimensional linear mixed models}

\author{Jozef Jakubík}
\institute{Slovak Academy of Sciences \\
 Institute of Measurement Science \\
 Dúbravská cesta 9 \\
 841 04 Bratislava 4 \\
 Slovakia\\
 \email{\href{mailto:jozef.jakubik.jefo@gmail.com}{jozef.jakubik.jefo@gmail.com}}  }
\date{Received: date / Accepted: date}

\maketitle

\begin{abstract}
 Analysis of high-dimensional data is currently a popular field of research, thanks to many applications e.g.\ in genetics (DNA data in genome-wide association studies), spectrometry or web analysis. At the same time, the type of problems that tend to arise in genetics can often be modelled using linear mixed models in conjunction with high-dimensional data because linear mixed models allow us to specify the covariance structure of the models. This enables us to capture relationships in data such as the population structure, family relatedness, etc.

 In this paper we introduce two new convex methods for variable selection in high-dimensional linear mixed models which, thanks to convexity, can handle many more variables than existing non-convex methods. Both methods are compared with existing methods and in the end we suggest an approach for a wider class of linear mixed models.
 \keywords{Variable selection \and Linear mixed models \and High-dimensional data}
\end{abstract}

\section{Introduction}
The work presented in the manuscript falls into the field of model selection for linear mixed models. The field has grown extremely rapidly in the last 5--10 years, see e.g.\ the review in~\cite{muller2013model}. However, the high dimensional setting presents specific theoretical as well as computational challenges. For high-dimensional linear mixed models (LMM)~\cite{lmm}, there exist a few approaches based on \(\ell_1\) penalization. Both methods from~\cite{schell,rohart} lead in general to non-convex problems. Moreover, neither implementation (Section~\ref{sec:ss}) is effective for solving high-dimensional problems with more than \(10^4\) variables.

Frequently, the aim of data analysis with LMM is to estimate the model covariance structure, in particular the influencing variance-covariance components, but in the high dimensionality settings, variable selection from the fixed effects design matrix, say  \(\boldsymbol{X}\), followed by  parameter estimation are in the spotlight. In the case of LMM, parameters can be estimated by solving Henderson's mixed model equations~\cite{hmme,witkovsky2001matlab} or any other parameter estimation method. Traditionally, after parameter estimation of the candidate models, we can use an adequate information criterion (as e.g.,  AIC, BIC, cAIC~\cite{vaida2005}, \dots) or cross-validation to select the right model.

In this paper we focus on the selection of variables (specifically regressors from the matrix \(\boldsymbol{X}\)). We introduce convex methods for variable selection in high-dimensional linear mixed models and we prove variable selection consistency for our method.
We shall argue that if \(q<n\) (see the model specifications below), then for the purpose of variable selection it might be sufficient (and efficient) to treat LMM as a classical linear regression model, however with appropriate rescaling (weighting) of the parameters representing different parts of the random effects.
Alternatively, we suggest to consider also a more complex method (\ref{eq:W}), based on utilizing the fixed weighing matrix derived from the covariance structure of the LMM\@. However, as indicated by our simulation experiments, its positive effect in proper variable selection is only minor if compared with the more simple and computationally more effective method (\ref{eq:3}), especially if the required variance-covariance components used to derive the weighing matrix are totally unknown and should be estimated from the given data.
The considered approaches are similar to some other methods for variable selection, see e.g., the elastic net~\cite{zou2005regularization} or the adaptive LASSO~\cite{zou2006adaptive}.

In Section~\ref{sec:int} we introduce a new convex method for regressor selection. Then in Section~\ref{sec:sc} we show that the method has the theoretical property that ensures consistent variable selection with a growing number of observations. In Section~\ref{sec:vah} we propose simple and efficient methods for constructing weights. In Section~\ref{sec:ss} we compare the new methods with known methods by simulations under different scenarios.
In Section~\ref{sec:velq} we propose a generalization of the proposed methods to the case when \(q>n\) and finally, the paper is concluded Section~\ref{sec:end}.

\section{Variable selection}\label{sec:int}

LASSO (Least Absolute Shrinkage and Selection Operator)~\cite{lasso} is a popular method for parameter estimation which can be used for variable selection in the field of high-dimensional linear regression models based on \(\ell_1\) penalisation. The popularity of LASSO is due to its simplicity --- it is both easy to understand and relatively easy to compute. LASSO can be formulated as a convex problem. Thanks to progress in convex programming, LASSO problems in high-dimensional linear regression models with up to a million variables can be solved effectively.

We consider a LMM in the form~\cite{laird1982random}:
\[  \boldsymbol{\mathit{Y}}  =  \boldsymbol{X \beta}  +   \boldsymbol{Z\mathit{u}}  + \boldsymbol{ \varepsilon}, \]
where
{
 \begin{description}
  \leftskip=1cm\item[\( \boldsymbol{\mathit{Y}}\)] is \(n \times 1\) vector of observations,
  \leftskip=1cm\item [\(\boldsymbol{X}\)]	 is \(n \times p\) matrix of regressors (fixed variables),
  \leftskip=1cm\item [\(\boldsymbol{\beta}\)]	is \(p \times 1\) vector of unknown fixed effects,
  \leftskip=1cm\item[\(\boldsymbol{Z}\)] is \( n\times q\) matrix of predictors (random variables),
  \leftskip=1cm\item [\(\boldsymbol{\mathit{u}}\)]	is \(q \times 1\) vector of random effects with the distribution \(\mathcal{N}(0,\boldsymbol{D}(\boldsymbol{\theta}) )\), where \(\boldsymbol{\theta}\) represents the vector of the variance-covariance components,
  \leftskip=1cm\item [\(\boldsymbol{ \varepsilon}\)] is \(n \times 1\) error vector with the distribution \(\mathcal{N}(0, \boldsymbol{R} = \sigma^2 \boldsymbol{I})\) and independent from \(\boldsymbol{\mathit{u}}\).
 \end{description} }

We assume that only the matrix \(\boldsymbol{X}\) is high-dimensional (i.e. \(p>n\)). We shall assume that \(\boldsymbol{Z}\) is such that \(q<n\), however, in Section~\ref{sec:velq} we shall discuss in details also the case with \(q > n\). Only a small group of variables from the matrix \(\boldsymbol{X}\)  (denote it \(S^0\), and \(s^0=|S^0|\) the number of relevant variables) influence the observations \(\boldsymbol{\mathit{Y}}\). All variables from \(\boldsymbol{Z}\) are relevant in the model, but some with only a small effect (because effects are from \(\mathcal{N}(0,\boldsymbol{D})\)).\\

The structure of matrix \(\boldsymbol{D}\) may vary depending on the relationship that it captures. For LMM, the following holds:

\begin{align*}
 \E (\boldsymbol{\mathit{Y}})   & = \boldsymbol{ X\beta},                                                                 \\
 \var (\boldsymbol{\mathit{Y}}) & = \boldsymbol{ZDZ}^\mathsf{T} +\boldsymbol{R} = \boldsymbol{V}(\boldsymbol{\vartheta}),
\end{align*}
where \(\boldsymbol{\vartheta}= (\boldsymbol{\theta},\sigma^2)\) is a vector parameter of the variance-covariance components.\\

All of the mentioned methods are primarily \(\boldsymbol{\beta}\) estimation methods, not selection methods. However, they can be thought of as selection methods if we define selected variables to be those for which \(\boldsymbol{\beta}_i \neq 0\), \(i=1, \dots , p\).

The first suggested approach for variable selection in high-dimensional LMM, called HDLMMnaive, consists in a transformation that removes group effects from data. The principle of this transformation is widely used in data analysis, for example in restricted/residual maximum likelihood (REML)~\cite{reml1,reml2}. In our case we transform the data as follows
\[ \tilde{\boldsymbol{X}}  = (\boldsymbol{I} - \boldsymbol{ZZ}^+)\boldsymbol{X}, \]
\[ \tilde{\boldsymbol{\mathit{Y}}}  =   (\boldsymbol{I} - \boldsymbol{ZZ}^+)\boldsymbol{\mathit{Y}}, \]
where \(\boldsymbol{Z}^+\) is the pseudoinverse matrix of \(\boldsymbol{Z}\).
The transformation eliminates random segments of the problem (associated with the matrix \(\boldsymbol{Z}\)), which allows us to use the LASSO method for linear regression models (with dependent errors).

This is a very naive approach but as we will see in Section~\ref{sec:ss}, the transformation works well in cases when the number of variables \(q\) in the matrix \(\boldsymbol{Z}\) is relatively small relative to the number of observations \(n\),

\[q \ll n \ll p.\]

Thanks to the fast algorithm from~\cite{friedman2010regularization}, this approach can efficiently handle problems of up to \(10^6\) variables (using the \texttt{MATLAB} function \texttt{lasso}). \\

In the second suggested approach,  called LMMconvexLASSO, in contrast to  approaches in~\cite{schell,rohart} we do not penalise the log-likelihood which generally leads to a non-convex objective function. To ensure convexity, we regard the LMM a classical (i.e.\ fixed effects only) linear regression model and adapt the LASSO objective function for this purpose. Since we do not want to select (just shrink) the variables in \(\boldsymbol{Z}\), we penalise the effects \(\boldsymbol{\mathit{u}}\) using the \(\ell_2\)-norm instead of the \(\ell_1\)-norm used with the fixed effects \(\boldsymbol{\beta}\). The two penalty terms have separate penalisation parameters:
\begin{equation}
 (\hat{\boldsymbol{\beta}},\hat{\boldsymbol{\mathit{u}}})=\argmin_{\boldsymbol{\beta},\boldsymbol{\mathit{u}}}\left[  \| \boldsymbol{\mathit{Y}}-\boldsymbol{X\beta} - \boldsymbol{Z\mathit{u}}  \|_2^2 + \lambda \| \boldsymbol{\beta} \|_1 + \Lambda  \|   {\boldsymbol{\mathit{u}}}{}  \|_2^2   \right],
 \label{eq:1}
\end{equation}
where \(\lambda \) and \(\Lambda \) are the independent penalisation parameters.

Thus we obtain a two-parameter, convex problem, as opposed to the one-parameter, non-convex methods from~\cite{schell,rohart}. The extra parameter is the price we pay for convexity. A problem arises, when the vector \(\boldsymbol{\mathit{u}}\) consists of (random) effects from different groups with significantly different sizes of effects. This occurs especially in cases when the vector \(\boldsymbol{\mathit{u}}\) consists of normal distributions with significantly different variance components. In such situations, we suggest to penalise the parts of the vector \(\boldsymbol{\mathit{u}}\) with different penalisation parameters. In particular, we suggest to consider
\begin{equation}
 (\hat{\boldsymbol{\beta}},\hat{\boldsymbol{\mathit{u}}})=\argmin_{\boldsymbol{\beta},\boldsymbol{\mathit{u}}}\left[  \| \boldsymbol{\mathit{Y}}-\boldsymbol{X\beta} - \boldsymbol{Z\mathit{u}}  \|_2^2 + \lambda_0 \| \boldsymbol{\beta} \|_1 +  \sum_{i=1}^{q^*} \lambda_i  \|   \tensor[_i]{\boldsymbol{\mathit{u}}}{}  \|_2^2   \right],
 \label{eq:2}
\end{equation}
where \(q^* \) is the number of variance components without \(\sigma^2\), and  \(\tensor[_i]{\boldsymbol{\mathit{u}}}{}\) is a subvector of vector \(\boldsymbol{\mathit{u}}\) which belongs to the \(i\)-th variance component.
\(\lambda_0\) and \(\lambda_i : i \in \{ 1, \dots, q^* \} \) are parameters fixed for every minimisation.

The larger number of penalization parameters leads, however,  to computationally more complex problems. So, as a compromise method, between (\ref{eq:1}) and (\ref{eq:2}), we suggest
\begin{equation}
 (\hat{\boldsymbol{\beta}},\hat{\boldsymbol{\mathit{u}}})=\argmin_{\boldsymbol{\beta},\boldsymbol{\mathit{u}}}\left[  \| \boldsymbol{\mathit{Y}}-\boldsymbol{X\beta} - \boldsymbol{Z\mathit{u}}  \|_2^2 + \lambda \| \boldsymbol{\beta} \|_1 + \Lambda \sum_{i=1}^{q^*} w_i  \|   \tensor[_i]{\boldsymbol{\mathit{u}}}{}  \|_2^2   \right],
 \label{eq:3}
\end{equation}
where \(q^* \) is the number of variance components without \(\sigma^2\), and  \( \tensor[_i]{\boldsymbol{\mathit{u}}}{}\) is a subvector of vector \(\boldsymbol{\mathit{u}}\) which typically belongs to the \(i\)-th simple variance component of the LMM\@.
\(\lambda \) and \(\Lambda \) are penalisation parameters fixed for every minimisation and \(w_i\) are preselected weights. We take a closer look at the selection of weights in the next section.\\

The structure of the covariance matrix is naturally defined by the LMM\@. Frequently, one may have good prior information about the covariance matrix \(\boldsymbol{D}\) or the variance-covariance components \(\boldsymbol{\theta}\) of the random effects vector \(\boldsymbol{\mathit{u}}\).
This information (about \(\boldsymbol{D}(\boldsymbol{\theta})\) or simply  about the parameter \(\boldsymbol{\theta}\)) may be available from the previous studies, from the expert knowledge, or as a natural (simple) estimator from the available data.
It is then natural to generalize the weights depending on the covariance structure:
\begin{equation}
 (\hat{\boldsymbol{\beta}},\hat{\boldsymbol{\mathit{u}}})=\argmin_{\boldsymbol{\beta},\boldsymbol{\mathit{u}}}\left[  \| \boldsymbol{\mathit{Y}}-\boldsymbol{X\beta} - \boldsymbol{Z\mathit{u}}  \|_2^2 + \lambda \| \boldsymbol{\beta} \|_1 + \Lambda \boldsymbol{\mathit{u}}^\mathsf{T} \boldsymbol{W} \boldsymbol{\mathit{u}}    \right],
 \label{eq:W}
\end{equation}
where \(\boldsymbol{W}\) is the (fixed) matrix of weights based on the structure of the covariance matrix \(\boldsymbol{D}\). Ideally, we would like to set \(\boldsymbol{W}=\boldsymbol{D}(\boldsymbol{\theta}){}^{-1}\), but the vector \(\theta \) of variance-covariance components is in general unknown. Hence, in real applications, it must be approximated based on our prior knowledge or pre-estimated by using suitable (simple) variance-covariance estimation method in reasonably restricted LMM, like e.g. MINQUE~\cite{rao1971estimation,lamotte1973quadratic}.\\

The principal goal of the suggested methods is to select the relevant regressors from \(\boldsymbol{X}\). After such selection we get the the restricted LMM with (hopefully) all relevant regressors included,  and a standard methods for estimation/prediction of the fixed/random effects as well as the variance-covariance components can be applied in the second step and further used for statistical inference.

\section{Sign consistency}\label{sec:sc}

We show that method (\ref{eq:1}) is sign consistent, implying model selection consistency (the property will clearly hold for methods (\ref{eq:2}) and (\ref{eq:3}) as well). The theory and proof presented below draw upon the work of~\cite{zhao2006model,knight2000asymptotics} on linear regression.

\begin{definition}
 A method is called \emph{sign consistent} if there exist parameters \(\lambda^n=\f(n)\) and \(\Lambda \) such that
 \begin{equation*}
  \lim_{n\rightarrow \infty} \Prop ( \hat{\boldsymbol{\beta}}^n(\lambda^n,\Lambda)  =_s \boldsymbol{\beta}^0)=1,
 \end{equation*}
 where \(  \hat{\boldsymbol{\beta}}^n  (\lambda^n,\Lambda) =_s \boldsymbol{\beta}^0 \) means \(  \sign(\hat{\boldsymbol{\beta}}^n (\lambda^n,\Lambda)) = \sign(\boldsymbol{\beta}^0 )\).
\end{definition}

Without loss of generality, assume \(\boldsymbol{\beta}^{0} = (\boldsymbol{\beta}^{0}(1),  \boldsymbol{\beta}^{0}(2) ){}^\mathsf{T}= (\beta^0_1,  \dots, \beta^0_k, \beta^0_{k+1},  \dots, \beta^0_p){}^\mathsf{T} \), where \(\beta^0_j \neq 0\) for \(j=1,2,\dots, k\) and  \(\beta^0_j = 0\) for \( j=k+1,k+2,\dots, p \). Partition \(\boldsymbol{X}^n\) into \(\boldsymbol{X}^n(1)\) and \(\boldsymbol{X}^n(2)\), corresponding to \(\boldsymbol{\beta}^0(1)\) and \(\boldsymbol{\beta}^0(2)\) respectively. Let
\begin{equation*}
 \boldsymbol{\Sigma}^n = \frac{1}{n}\left[ \boldsymbol{X}^n , \boldsymbol{Z}^n\right] {}^\mathsf{T} \left[ \boldsymbol{X}^n , \boldsymbol{Z}^n\right]=
 \begin{pmatrix}
  \boldsymbol{\Sigma}_{1,1}^n & \boldsymbol{\Sigma}_{1,2}^n & \boldsymbol{\Sigma}_{1,3}^n \\
  \boldsymbol{\Sigma}_{2,1}^n & \boldsymbol{\Sigma}_{2,2}^n & \boldsymbol{\Sigma}_{2,3}^n \\
  \boldsymbol{\Sigma}_{3,1}^n & \boldsymbol{\Sigma}_{3,2}^n & \boldsymbol{\Sigma}_{3,3}^n
 \end{pmatrix}
\end{equation*}
\begin{lemma}\label{lemm:1}
 If the irrepresentable condition (\ref{ircon} in the proof) holds then
 \begin{equation*}
  \Prop ( \hat{\boldsymbol{\beta}}^n(\lambda^n,\Lambda)  =_s \boldsymbol{\beta}^0) \geq \Prop	(A^n \cap B^n),
 \end{equation*}
 for
 \begin{equation*}
  A^n = \left\lbrace    \left| (\boldsymbol{\Psi}^n){}^{-1} \boldsymbol{\Phi}^n \boldsymbol{\xi}^n \right|   <  n  \left| \boldsymbol{w} \right|  -    \frac{\lambda^n}{2}\left| (\boldsymbol{\Psi}^n){}^{-1} \boldsymbol{\theta} \right|  \right\rbrace
 \end{equation*}
 \begin{equation*}
  B^n = \left\lbrace \left|  ( \boldsymbol{\Delta}^n  (\boldsymbol{\Psi}^n ){}^{-1}  \boldsymbol{\Phi}^n   -  (\boldsymbol{X}^n(2)){}^\mathsf{T})\boldsymbol{\xi}^n \right| \leq \frac{\lambda^n}{2}\boldsymbol{\eta}  \right\rbrace .
 \end{equation*}
 where
 \begin{alignat*}{2}
  \boldsymbol{\Phi}^n & =
  \begin{pmatrix}
  (\boldsymbol{X}^n(1)){}^\mathsf{T}    \\
  (\boldsymbol{Z}^n){}^\mathsf{T}
  \end{pmatrix} , \hspace{1cm}
    & \boldsymbol{\Psi}^n & =
  \begin{pmatrix}
  \boldsymbol{\Sigma}_{1,1}^n & \boldsymbol{\Sigma}_{1,3}^n   \\
  \boldsymbol{\Sigma}_{3,1}^n & \boldsymbol{\Sigma}_{3,3}^n +\frac{\Lambda}{n}\boldsymbol{I}
  \end{pmatrix} , \\
  \boldsymbol{\Delta}^n & =
  \begin{pmatrix}
  \boldsymbol{\Sigma}_{2,1}^n  &  \boldsymbol{\Sigma}_{2,3}^n
  \end{pmatrix} ,
    & \boldsymbol{\theta} & =
  \begin{pmatrix}
  \sign(\boldsymbol{\beta}^0(1))   \\
  \boldsymbol{0}_{q\times 1}
  \end{pmatrix} .
 \end{alignat*}
\end{lemma}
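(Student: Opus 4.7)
The plan is to apply the primal-dual witness (``oracle'') construction familiar from \cite{zhao2006model}. Because problem (\ref{eq:1}) is convex, it suffices to exhibit, with probability at least $\Prop(A^n \cap B^n)$, a minimizer $(\hat{\boldsymbol{\beta}}^n, \hat{\boldsymbol{\mathit{u}}}^n)$ with the prescribed sign pattern. I begin by writing the Karush-Kuhn-Tucker conditions:
\[ 2(\boldsymbol{X}^n)^\mathsf{T}(\boldsymbol{X}^n \hat{\boldsymbol{\beta}} + \boldsymbol{Z}^n \hat{\boldsymbol{\mathit{u}}} - \boldsymbol{\mathit{Y}}) + \lambda^n \boldsymbol{s} = \boldsymbol{0}, \qquad 2(\boldsymbol{Z}^n)^\mathsf{T}(\boldsymbol{X}^n \hat{\boldsymbol{\beta}} + \boldsymbol{Z}^n \hat{\boldsymbol{\mathit{u}}} - \boldsymbol{\mathit{Y}}) + 2\Lambda \hat{\boldsymbol{\mathit{u}}} = \boldsymbol{0}, \]
where $\boldsymbol{s}\in\partial\|\hat{\boldsymbol{\beta}}\|_1$, so $s_j=\sign(\hat\beta_j)$ on the support and $|s_j|\le 1$ elsewhere.

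I then build the witness by postulating $\hat{\boldsymbol{\beta}}(2)=\boldsymbol{0}$ and solving, for $(\hat{\boldsymbol{\beta}}(1),\hat{\boldsymbol{\mathit{u}}})$, the first $k$ rows of the $\boldsymbol{\beta}$-KKT equation together with the full $\boldsymbol{\mathit{u}}$-KKT equation. Writing $\boldsymbol{\xi}^n=\boldsymbol{\mathit{Y}}-\boldsymbol{X}^n\boldsymbol{\beta}^0$ for the unexplained part of the response, this reduced system reads $n\boldsymbol{\Psi}^n((\hat{\boldsymbol{\beta}}(1))^\mathsf{T},\hat{\boldsymbol{\mathit{u}}}^\mathsf{T})^\mathsf{T} = \boldsymbol{\Phi}^n\boldsymbol{\mathit{Y}} - \tfrac{\lambda^n}{2}\boldsymbol{\theta}$, and subtracting $n\boldsymbol{\Psi}^n((\boldsymbol{\beta}^0(1))^\mathsf{T},\boldsymbol{0}^\mathsf{T})^\mathsf{T} = \boldsymbol{\Phi}^n\boldsymbol{X}^n(1)\boldsymbol{\beta}^0(1)$ from both sides gives the clean formula
\[ \begin{pmatrix} \hat{\boldsymbol{\beta}}(1) - \boldsymbol{\beta}^0(1) \\ \hat{\boldsymbol{\mathit{u}}} \end{pmatrix} \;=\; \frac{1}{n}(\boldsymbol{\Psi}^n)^{-1}\boldsymbol{\Phi}^n \boldsymbol{\xi}^n \;-\; \frac{\lambda^n}{2n}(\boldsymbol{\Psi}^n)^{-1}\boldsymbol{\theta}. \]
Invertibility of $\boldsymbol{\Psi}^n$ is guaranteed by the $\tfrac{\Lambda}{n}\boldsymbol{I}$ block inherited from the $\ell_2$ penalty on $\boldsymbol{\mathit{u}}$.

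The two events now fall out by componentwise triangle inequality. For $A^n$: sign consistency of $\hat{\boldsymbol{\beta}}(1)$ is implied by $|\hat{\beta}_j-\beta_j^0|<|\beta_j^0|$ for $j=1,\dots,k$; bounding the representation above entrywise produces exactly the inequality defining $A^n$, with $\boldsymbol{w}$ the natural extension of $\boldsymbol{\beta}^0(1)$ to $\mathbb{R}^{k+q}$. For $B^n$: primal-dual feasibility of $\hat{\boldsymbol{\beta}}(2)=\boldsymbol{0}$ requires $|2(\boldsymbol{X}^n(2))^\mathsf{T}(\boldsymbol{\mathit{Y}}-\boldsymbol{X}^n(1)\hat{\boldsymbol{\beta}}(1)-\boldsymbol{Z}^n\hat{\boldsymbol{\mathit{u}}})|\le\lambda^n$; rewriting the residual using the witness formula and using $(\boldsymbol{X}^n(2))^\mathsf{T}[\boldsymbol{X}^n(1)\ \boldsymbol{Z}^n] = n\boldsymbol{\Delta}^n$ reduces this to $|(\boldsymbol{\Delta}^n(\boldsymbol{\Psi}^n)^{-1}\boldsymbol{\Phi}^n - (\boldsymbol{X}^n(2))^\mathsf{T})\boldsymbol{\xi}^n - \tfrac{\lambda^n}{2}\boldsymbol{\Delta}^n(\boldsymbol{\Psi}^n)^{-1}\boldsymbol{\theta}|\le\lambda^n/2$. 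This is where the irrepresentable condition $|\boldsymbol{\Delta}^n(\boldsymbol{\Psi}^n)^{-1}\boldsymbol{\theta}|<\boldsymbol{1}$ kicks in: absorbing the deterministic term by the triangle inequality yields $B^n$ with $\boldsymbol{\eta} = \boldsymbol{1} - |\boldsymbol{\Delta}^n(\boldsymbol{\Psi}^n)^{-1}\boldsymbol{\theta}|$.

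The main technical point is not the algebra but the verification that the constructed witness is actually the (unique) minimizer. Strict positive definiteness of $\boldsymbol{\Psi}^n$ (again thanks to $\tfrac{\Lambda}{n}\boldsymbol{I}$) delivers on-support uniqueness of $(\hat{\boldsymbol{\beta}}(1),\hat{\boldsymbol{\mathit{u}}})$, and the strict inequalities in $A^n$ and $B^n$ force $|s_j|<1$ off the support via dual feasibility, pinning down the zero pattern $\hat{\boldsymbol{\beta}}(2)=\boldsymbol{0}$. Once this is in place, sign consistency holds throughout $A^n\cap B^n$, and the stated lower bound on $\Prop(\hat{\boldsymbol{\beta}}^n=_s\boldsymbol{\beta}^0)$ follows.
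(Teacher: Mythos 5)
Your argument is essentially the paper's own proof: both construct a primal--dual witness from the KKT conditions of the recentred problem (\ref{eq:dok2}), solve the restricted system by inverting \(\boldsymbol{\Psi}^n\) (invertible thanks to the \(\frac{\Lambda}{n}\boldsymbol{I}\) block), and read off \(A^n\) from the sign-preservation requirement \(|\hat{\beta}_j-\beta^0_j|<|\beta^0_j|\) together with a constant bound on \(\hat{\boldsymbol{\mathit{u}}}\), and \(B^n\) from dual feasibility of \(\hat{\boldsymbol{\beta}}(2)=\boldsymbol{0}\) combined with the irrepresentable condition (\ref{ircon}). The differences are cosmetic: your sign on the \(\frac{\lambda^n}{2}\boldsymbol{\theta}\) term is the more careful one (immaterial here, since only \(|(\boldsymbol{\Psi}^n)^{-1}\boldsymbol{\theta}|\) enters \(A^n\)), and you are somewhat more explicit than the paper about why the constructed witness is indeed the minimizer.
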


\begin{proof}
 Let \(\boldsymbol{b}  = \boldsymbol{\beta} - \boldsymbol{\beta}^0 \). Then minimising the problem (\ref{eq:dok1}) or (\ref{eq:dok2}) is equivalent.
 \begin{align}
  \label{eq:dok1}
  (\hat{\boldsymbol{\beta}}^n,\hat{\boldsymbol{\mathit{u}}}^n) & =\argmin_{\boldsymbol{\beta},\boldsymbol{\mathit{u}}}\left[  \| \boldsymbol{\mathit{Y}}^n-\boldsymbol{X}^n \boldsymbol{\beta} - \boldsymbol{Z}^n \boldsymbol{\mathit{u}}  \|_2^2 + \lambda^n \| \boldsymbol{\beta} \|_1 + \Lambda  \|   {\boldsymbol{\mathit{u}}}  \|_2^2   \right]    \\
  \label{eq:dok2}
  (\hat{\boldsymbol{b}}^n,\hat{\boldsymbol{\mathit{u}}}^n )    & = \argmin_{\boldsymbol{b},\boldsymbol{\mathit{u}}} \left[ \| \boldsymbol{\xi}^n -\boldsymbol{X}^n\boldsymbol{b} - \boldsymbol{Z}^n \boldsymbol{\mathit{u}} \|_2^2 + \lambda^n \| \boldsymbol{b} + \boldsymbol{\beta}^0  \|_1 + \Lambda  \|   {\boldsymbol{\mathit{u}}} \|_2^2  \right]
 \end{align}
 where
 \begin{equation*}
  \boldsymbol{\xi}^n = \boldsymbol{\mathit{Y}}^n - \boldsymbol{X}^n \boldsymbol{\beta}^0 = \boldsymbol{\varepsilon}^n +\boldsymbol{Z}^n \boldsymbol{\mathit{u}}^0.
  \label{eq:norm}
 \end{equation*}
 The Karush-Kuhn-Tucker conditions for (\ref{eq:dok2}) are:
 \begin{align*}
  \frac{ \partial \| \boldsymbol{\xi}^n -\boldsymbol{X}^n\boldsymbol{b} - \boldsymbol{Z}^n \boldsymbol{\mathit{u}} \|_2^2 }{\partial b_j} |_{\boldsymbol{b}=\hat{\boldsymbol{b}}^n,\boldsymbol{\mathit{u}}=\hat{\boldsymbol{\mathit{u}}}^n}                 & = - \lambda^n \sign(\hat{b}_j^n) \hspace{0.3cm} & j & : \hat{b}_j^n \neq 0 \\
  \left| \frac{ \partial \| \boldsymbol{\xi}^n -\boldsymbol{X}^n\boldsymbol{b} - \boldsymbol{Z}^n \boldsymbol{\mathit{u}} \|_2^2 }{\partial b_j} |_{\boldsymbol{b}=\hat{\boldsymbol{b}}^n,\boldsymbol{\mathit{u}}=\hat{\boldsymbol{\mathit{u}}}^n}  \right| & \leq  \lambda^n                                 & j & : \hat{b}_j^n =0     \\
  \frac{ \partial \| \boldsymbol{\xi}^n -\boldsymbol{X}^n\boldsymbol{b} - \boldsymbol{Z}^n \boldsymbol{\mathit{u}} \|_2^2 }{\partial u_j} |_{\boldsymbol{b}=\hat{\boldsymbol{b}}^n,\boldsymbol{\mathit{u}}=\hat{\boldsymbol{\mathit{u}}}^n}                 & = - \Lambda \hat{u}_j^n                         & j & : \{1,2, \dots,q\}
  \label{eq:KKT}
 \end{align*}

 After performing the differentiation, we find that if there exist vectors \(\boldsymbol{b}^*, \boldsymbol{\mathit{u}}^*\) satisfying:
 \begin{equation}
  2   (\boldsymbol{X}^n(1)){}^\mathsf{T}\boldsymbol{X}^n(1)\boldsymbol{b}^* + 2 (\boldsymbol{X}^n(1)){}^\mathsf{T} \boldsymbol{Z}^n\boldsymbol{\mathit{u}}^* - 2( \boldsymbol{X}^n (1)){}^{\mathsf{T}}\boldsymbol{\xi}^n= -\lambda^n \sign(\boldsymbol{b}^*)  ,
  \label{eq:rov1}
 \end{equation}
 \begin{equation}
  -\lambda_n \boldsymbol{1} \leq 2 (\boldsymbol{X}^n(2)){}^\mathsf{T}\boldsymbol{X}^n(1)\boldsymbol{b}^* + 2 (\boldsymbol{X}^n(2)){}^\mathsf{T} \boldsymbol{Z}^n\boldsymbol{\mathit{u}}^* -  2( \boldsymbol{X}^n (2)){}^\mathsf{T}\boldsymbol{\xi}^n\leq \lambda^n \boldsymbol{1}   ,
  \label{eq:rov2}
 \end{equation}
 \begin{equation}
  2  (\boldsymbol{Z}^n){}^\mathsf{T}\boldsymbol{Z}^n \boldsymbol{\mathit{u}}^*  + 2( \boldsymbol{Z}^n ){}^\mathsf{T} \boldsymbol{X}^n(1)\boldsymbol{b}^* - 2( \boldsymbol{Z}^n ){}^\mathsf{T}\boldsymbol{\xi}^n =-2 \Lambda \boldsymbol{\mathit{u}}^*,
  \label{eq:rov3}
 \end{equation}
 then the vectors \(\hat{\boldsymbol{b}}^n =(\hat{\boldsymbol{b}}^n(1)=\boldsymbol{b}^*, \hat{\boldsymbol{b}}^n(2)=\boldsymbol{0}) \) (division of the vector \(\hat{\boldsymbol{b}}^n\) is equivalent to the division of the vector \(\hat{\boldsymbol{\beta}}^n\)) and \( \hat{\boldsymbol{\mathit{u}}}^n = \boldsymbol{\mathit{u}}^*\) are the solution of (\ref{eq:dok2}) and it holds that \(\hat{\boldsymbol{\beta}}^n(2)=0\).\\
 If instead of (\ref{eq:rov1}) we have
 \begin{equation}
  2n  (\boldsymbol{X}^n(1) ){}^\mathsf{T}\boldsymbol{X}^n(1)\boldsymbol{b}^* + 2 (\boldsymbol{X}^n(1) ){}^\mathsf{T} \boldsymbol{Z}^n\boldsymbol{\mathit{u}}^* - 2 ( \boldsymbol{X}^n (1)){}^\mathsf{T}\boldsymbol{\xi}^n= \lambda^n \sign(\boldsymbol{\beta}^0(1))  ,
  \label{eq:rov4}
 \end{equation}
 \begin{equation}
  |\boldsymbol{b}^*| < |\boldsymbol{\beta}^0(1)|,
  \label{eq:rov5}
 \end{equation}
 than \(\sign(\hat{\boldsymbol{\beta}}^n(1))=\sign(\boldsymbol{\beta}^0(1))\).\\

 Also we can bound \(|\boldsymbol{\mathit{u}}^*|<C\cdot\boldsymbol{1}_{q\times 1}~\refstepcounter{equation}(\theequation)\label{eq:rov6} \) by a constant, because \(\left|\boldsymbol{\mathit{u}}^*\right|\) is bounded.
 We use the following notation:
 \begin{equation*}
  \boldsymbol{\Psi}^n =
  \begin{pmatrix}
   \boldsymbol{\Sigma}_{1,1}^n & \boldsymbol{\Sigma}_{1,3}^n                                  \\
   \boldsymbol{\Sigma}_{3,1}^n & \boldsymbol{\Sigma}_{3,3}^n +\frac{\Lambda}{n}\boldsymbol{I}
  \end{pmatrix}, \hspace{0.5cm}
  \boldsymbol{\Delta}^n =
  \begin{pmatrix}
   \boldsymbol{\Sigma}_{2,1}^n & \boldsymbol{\Sigma}_{2,3}^n
  \end{pmatrix} , \hspace{0.5cm}
  \boldsymbol{\upsilon}^* =
  \begin{pmatrix}
   \boldsymbol{b}^*          \\
   \boldsymbol{\mathit{u}}^*
  \end{pmatrix} ,
 \end{equation*}
 \begin{equation*}
  \boldsymbol{\Phi}^n =
  \begin{pmatrix}
   (\boldsymbol{X}^n(1)){}^\mathsf{T} \\
   (\boldsymbol{Z}^n ){}^\mathsf{T}
  \end{pmatrix} ,\hspace{0.5cm}
  \boldsymbol{\theta} =
  \begin{pmatrix}
   \sign(\boldsymbol{\beta}^0(1)) \\
   \boldsymbol{0}
  \end{pmatrix}
 \end{equation*}
 and rewrite (\ref{eq:rov4}), (\ref{eq:rov2}) and (\ref{eq:rov3}):
 \begin{equation}
  \boldsymbol{\Psi}^n \boldsymbol{\upsilon}^*  =  \frac{1}{n}( \boldsymbol{\Phi}^n \boldsymbol{\xi}^n +\frac{\lambda^n}{2}\boldsymbol{\theta}) ,
  \label{eq:rov10}
 \end{equation}
 \begin{equation}
  -\frac{\lambda^n}{2n} \boldsymbol{1} \leq  \boldsymbol{\Delta}^n   \boldsymbol{\upsilon}^* - \frac{1}{n} (\boldsymbol{X}^n(2)){}^\mathsf{T}\boldsymbol{\xi}^n \leq \frac{\lambda^n}{2n} \boldsymbol{1}.
  \label{eq:rov11}
 \end{equation}
 If the matrix \(\boldsymbol{\Psi}^n\) is invertible, then we can express  \(  \boldsymbol{\upsilon}^* \) from~(\ref{eq:rov10})  and bound it using (\ref{eq:rov5}) and (\ref{eq:rov6}):
 \begin{equation*}
  | \boldsymbol{w} | =\left| \begin{pmatrix}
  \boldsymbol{\boldsymbol{\beta}}^0 (1)  \\
  C \cdot \boldsymbol{1}_{q\times 1}
  \end{pmatrix} \right|>
  \left| \begin{pmatrix}
  \boldsymbol{b}^*  \\
  \boldsymbol{\mathit{u}}^*
  \end{pmatrix} \right| = | \boldsymbol{\upsilon}^* |.
  \label{pod1}
 \end{equation*}
 There exists a solution to (\ref{eq:rov10}) if the solution to (\ref{eq:dok1}) is unique (almost sure is~\cite{tibshirani2013lasso}) and if the solution set in
 \begin{equation}
  \left| (\boldsymbol{\Psi}^n){}^{-1} \boldsymbol{\Phi}^n \boldsymbol{\xi}^n \right| <  n  \left| \boldsymbol{w} \right|  -    \frac{\lambda^n}{2}\left| (\boldsymbol{\Psi}^n){}^{-1} \boldsymbol{\theta} \right| ,
  \label{eq:rov14}
 \end{equation}
 for \(\boldsymbol{w}\) is not empty.
 We can substitute \(\boldsymbol{\upsilon}^*\) in (\ref{eq:rov11}):
 \begin{equation*}
  \left|   \boldsymbol{\Delta}^n  (\boldsymbol{\Psi}^n ){}^{-1} (( \boldsymbol{\Phi}^n \boldsymbol{\xi}^n +\lambda^n\boldsymbol{\theta})) -  (\boldsymbol{X}^n(2)){}^\mathsf{T}\boldsymbol{\xi}^n \right| \leq \frac{\lambda^n}{2} \boldsymbol{1},
  \label{eq:rov12}
 \end{equation*}
 and rewrite as
 \begin{equation}
  \left|  ( \boldsymbol{\Delta}^n  (\boldsymbol{\Psi}^n){}^{-1}  \boldsymbol{\Phi}^n   -  (\boldsymbol{X}^n(2)){}^\mathsf{T})\boldsymbol{\xi}^n \right| \leq \frac{\lambda^n}{2} (\boldsymbol{1} - \left| \boldsymbol{\Delta}^n  (\boldsymbol{\Psi}^n ){}^{-1}  \boldsymbol{\theta} \right|).
  \label{eq:rov13}
 \end{equation}
 \((\boldsymbol{1} - \left| \boldsymbol{\Delta}^n  (\boldsymbol{\Psi}^n ){}^{-1}  \boldsymbol{\theta} \right|)\) needs to be positive, therefore we define the \textbf{irrepresentable condition}. There exists a positive constant vector \(\boldsymbol{\eta}\) for which
 \begin{equation}
  \left| \boldsymbol{\Delta}^n  (\boldsymbol{\Psi}^n ){}^{-1}  \boldsymbol{\theta} \right|<1-\boldsymbol{\eta}.
  \label{ircon}
 \end{equation}
 And now (\ref{eq:rov14}) and (\ref{eq:rov13}) can be rewrite as \(A^n\) and \(B^n\) from Lemma.
\end{proof}

\(A^n\) implies that the signs of \(\boldsymbol{\beta}^0(1)\) are estimated correctly and together with \(B^n\) implies that \(\boldsymbol{\beta} ^0(2)\) are shrunk to zero.

Let
\begin{equation}
 \begin{aligned}
  (\boldsymbol{\Psi}^n){}^{-1} \boldsymbol{\Phi}^n \boldsymbol{\xi}^n /\sqrt{n} \rightarrow_d \mathcal{N}(0, \chi_1)                                                                      \\
  ( \boldsymbol{\Delta}^n  (\boldsymbol{\Psi}^n ){}^{-1}  \boldsymbol{\Phi}^n   -  (\boldsymbol{X}^n(2)){}^\mathsf{T})\boldsymbol{\xi}^n /\sqrt{n}  \rightarrow_d \mathcal{N}(0, \chi_2).
 \end{aligned}
 \label{podnorm}
\end{equation}
we assume that variance \(\chi_1\) and \(\chi_2\) are finite. As we can see in~\cite{knight2000asymptotics}, that holds if
\begin{equation*}
 \boldsymbol{\Sigma}^n \rightarrow \boldsymbol{\Sigma} \text{ \hspace{0.3cm} as \hspace{0.3cm}  } n \rightarrow \infty,
\end{equation*}
where \(\boldsymbol{\Sigma}\) is a positive definite matrix. And,
\begin{equation*}
 \frac{1}{n} \max_{1\leq i \leq n} (\left[ (\boldsymbol{x}_i^n, \boldsymbol{z}_i^n ) \right] {}^\mathsf{T}\left[ \boldsymbol{x}_i^n,\boldsymbol{z}_i^n\right] ) \rightarrow 0 \text{ \hspace{0.3cm} as \hspace{0.3cm}  } n \rightarrow \infty.
\end{equation*}

\begin{theorem}\label{theorem1}
 Method (\ref{eq:1}) is sign consistent for \(\lambda_n\) satisfying \(\lambda_n / n \rightarrow 0\) and \(\lambda_n / n^\frac{1+c}{2} \rightarrow \infty \), where \(0\leq c < 1\) under the conditions of finite variance matrices \(\chi_1\), \(\chi_2\) and the irrepresentable condition, and we have:
 \[\Prop(\hat{\boldsymbol{\beta}}(\lambda_n)=_s \boldsymbol{\beta}^0 )= 1- \oo(\expon^{-n^c})\]
\end{theorem}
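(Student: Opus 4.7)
My plan is to invoke Lemma~\ref{lemm:1}, which reduces the claim to proving $\Prop((A^n)^c) + \Prop((B^n)^c) = \oo(\expon^{-n^c})$. The observation that makes this calculation clean is that $\boldsymbol{\xi}^n = \boldsymbol{\varepsilon}^n + \boldsymbol{Z}^n \boldsymbol{\mathit{u}}^0$ is an \emph{exact} centered Gaussian vector (not merely asymptotically so), so the random vectors
\[ G_1^n := (\boldsymbol{\Psi}^n)^{-1} \boldsymbol{\Phi}^n \boldsymbol{\xi}^n, \qquad G_2^n := \bigl(\boldsymbol{\Delta}^n (\boldsymbol{\Psi}^n)^{-1}\boldsymbol{\Phi}^n - (\boldsymbol{X}^n(2))^\mathsf{T}\bigr)\boldsymbol{\xi}^n \]
appearing in $A^n$ and $B^n$ are multivariate Gaussian with mean zero. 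By~(\ref{podnorm}), the variance of every coordinate of $G_i^n/\sqrt{n}$ is uniformly bounded by some constant $K$ for all $n$ large, so one-dimensional Gaussian tail bounds apply coordinate-wise.

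For $(A^n)^c$ I would take a union bound over the $k+q$ coordinates. Since $\lambda^n/n \to 0$ and the entries of $|\boldsymbol{w}|$ are bounded below by $\gamma := \min(\min_{j\le k}|\beta^0_j|,\, C) > 0$, the right-hand side of the defining inequality for $A^n$ exceeds $\tfrac{1}{2}\gamma\, n$ for every sufficiently large $n$ (using also that $(\boldsymbol{\Psi}^n)^{-1}$ converges to a bounded matrix, so $|(\boldsymbol{\Psi}^n)^{-1}\boldsymbol{\theta}|_j$ stays bounded). The Gaussian tail inequality then gives $\Prop((A^n)^c) \le 2(k+q)\exp\bigl(-\gamma^{2} n/(8K)\bigr)$, which is $\oo(\expon^{-n^c})$ since $c<1$.

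For $(B^n)^c$ the per-coordinate threshold is $(\lambda^n/2)\eta_j$, with $\eta_{\min} := \min_j \eta_j > 0$ supplied by the irrepresentable condition, while the variance of each coordinate of $G_2^n$ is at most $Kn$. A Gaussian tail bound combined with a union bound over the at most $p-k$ coordinates yields $\Prop((B^n)^c) \le 2(p-k)\exp\bigl(-(\lambda^n)^{2}\eta_{\min}^{2}/(8Kn)\bigr)$. The lower-growth assumption $\lambda^n / n^{(1+c)/2} \to \infty$ is exactly what forces $(\lambda^n)^2/n$ to dominate $n^c$, so this too is $\oo(\expon^{-n^c})$, provided $p$ grows at most subexponentially in $n$.

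The main obstacle I anticipate is the tight interplay between the two rate conditions on $\lambda^n$: the upper bound $\lambda^n = \oo(n)$ is what keeps the $A^n$-threshold safely above the $\sqrt{n}$-scale fluctuations of $G_1^n$, while the lower bound $\lambda^n \gg n^{(1+c)/2}$ is essentially sharp for making the Gaussian tail of $G_2^n$ decay as fast as $\expon^{-n^c}$. Assembling the two bounds and applying Lemma~\ref{lemm:1} then gives
\[ \Prop(\hat{\boldsymbol{\beta}}(\lambda^n) =_s \boldsymbol{\beta}^0) \ge 1 - \Prop((A^n)^c) - \Prop((B^n)^c) = 1 - \oo(\expon^{-n^c}), \]
which is the stated conclusion.
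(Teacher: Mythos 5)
Your proposal is correct and follows essentially the same route as the paper's proof: reduce via Lemma~\ref{lemm:1} to bounding \(\Prop((A^n)^C)+\Prop((B^n)^C)\), take a union bound over the \(k+q\) and \(p-k\) coordinates, and apply Gaussian tail bounds, with the conditions \(\lambda^n/n\to 0\) and \(\lambda^n/n^{(1+c)/2}\to\infty\) entering exactly as you describe for the \(A^n\) and \(B^n\) terms respectively. Your explicit caveat that \(p\) must grow at most subexponentially in \(n\) is a point the paper leaves implicit, but otherwise the two arguments coincide.
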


\begin{proof}
 By Lemma~\ref{lemm:1} if the irrepresentable condition holds
 \begin{equation*}
  \Prop ( \hat{\boldsymbol{\beta}}^n(\lambda^n,\Lambda)  =_s \boldsymbol{\beta}^0) \geq \Prop	(A^n \cap B^n).
 \end{equation*}
 It follows that
 \begin{align*}
  1- \Prop	(A^n \cap B^n) & \leq \Prop	((A^n){}^C) + \Prop ((B^n){}^C)                                                                                                                                                            \\
                          & \leq \sum_{i=1}^{k+q} \Prop	(|{\gamma}^n_i|\geq \sqrt{n}(|{\boldsymbol{w}}_i|-\frac{\lambda^n}{2n} {\tau}^n_i)) + \sum_{i=1}^{p-k} \Prop (|{\delta}^n_i| \geq \frac{\lambda^n}{2\sqrt{n}} {\eta}_i) ,
 \end{align*}
 where
 \begin{align*}
  \boldsymbol{\gamma}^n & =({\gamma}^n_1, {\gamma}^n_2, \dots, {\gamma}^n_{(k+q)}){}^\mathsf{T}  =( \boldsymbol{\Psi}^n){}^{-1} \boldsymbol{\Phi}^n \boldsymbol{\xi}^n/\sqrt{n}                                                                      \\
  \boldsymbol{\delta}^n & = ({\delta}^n_1, {\delta}^n_2, \dots, {\delta}^n_{(p-k)}){}^\mathsf{T} = ( \boldsymbol{\Delta}^n  (\boldsymbol{\Psi}^n ){}^{-1}  \boldsymbol{\Phi}^n   -  (\boldsymbol{X}^n(2)){}^\mathsf{T})\boldsymbol{\xi}^n / \sqrt{n} \\
  \boldsymbol{\tau}^n   & =  ({\tau}^n_1, {\tau}^n_2, \dots, {\tau}^n_{(k+q)}){}^\mathsf{T} = (\boldsymbol{\Psi}^n){}^{-1} \boldsymbol{\theta} .
 \end{align*}
 For \(t > 0\), the Gaussian distribution has its tail probability bounded by
 \begin{equation*}
  1-\normal(t) < t^{-1} \expon^{-\frac{1}{2}t^2}
 \end{equation*}
 therefore
 \begin{align*}
  \sum_{i=1}^{k+q}  \Prop	(|\boldsymbol{\gamma}^n_i|\geq \sqrt{n}(|{\boldsymbol{w}}_i|-\frac{\lambda^n}{2n} {\tau}^n_i) )
    & \leq  (1-\oo(1))  \sum_{i=1}^{k+q}   (1- \normal((1+\oo(1))\frac{1}{s} \sqrt{n}|{\boldsymbol{w}}_i|)	) \\
    & = \oo (\expon ^{-n^c}),
 \end{align*}
 and
 \begin{equation*}
  \sum_{i=1}^{p-k} \Prop (|\boldsymbol{\delta}^n_i| \geq \frac{\lambda^n}{2\sqrt{n}} \boldsymbol{\eta}_i)
  = \sum_{i=1}^{p-k}  (1- \normal(\frac{\lambda^n}{2s\sqrt{n}}  \boldsymbol{\eta}_i )	)
  = \oo (\expon ^{-n^c}).
 \end{equation*}
 Theorem~\ref{theorem1} follows immediately.
\end{proof}

\section{Selection of weights}\label{sec:vah}

Investigating all combinations of penalisation parameters in the case of equation (\ref{eq:2}) can be very time consuming, because the number of parameter combinations grows exponentially with the number of variance components. However, in many cases it is not necessary to investigate all combinations and it suffices to replace all parameters by one, as in the case of equation (\ref{eq:1}). Below, in Section~\ref{sec:ss}, Figure~\ref{fig:obr3}, we can see that the replacement of all parameters \(\lambda_i : i \in \{ 1, \dots, q^* \} \) by one parameter \(\Lambda \) gives in many cases identical or very similar results as investigating all combinations of lambdas. Assuming that the computing time of a single optimisation problem is roughly the same for each parameter combination, in the case of equation (\ref{eq:1}) we must investigate \(k^2\) combinations and in the case of equation (\ref{eq:2}) we must investigate \(k^{q^*}\) combinations, which can take considerably more time.

On the other hand, as shown in Figure (\ref{fig:obr4}), the replacing of all parameters \(\lambda_i : i \in \{ 1, \dots, q^* \} \) by one parameter \(\Lambda \) leads in some cases to worse results. It is obvious that a suitable preselection of weights \(w_i\) in the case of equation (\ref{eq:3}) can lead to exactly the same results as investigating all combinations of penalisation parameters in the case of equation (\ref{eq:2}). Moreover, solving the problem (\ref{eq:3}) can be equally fast as solving (\ref{eq:1}).

The preselection of good weights \(w_i\) is crucial. In our simulation study (Section~\ref{sec:ss}) we use the following weights:
\begin{equation}
 w_i = \frac{1-\tensor[_i]{\theta}{}}{\tensor[_i]{q}{}},
 \label{eq:4}
\end{equation}
where \(\tensor[_i]{q}{}\) is the number of variables in matrix \(\boldsymbol{Z}\) belonging to the \(i\)-th variance component (the number of effects in the subvector \(\tensor[_i]{\boldsymbol{\mathit{u}}}{}\)), \(q= \sum_{i=1}^{q^*} \tensor[_i]{q}{}\). \(\tensor[_i]{\theta}{}\) is the average absolute value of correlation between the variables from matrix \(\boldsymbol{Z}\) belonging to the \(i\)-th variance component (\(\tensor[_{\boldsymbol{\mathit{u}}_i}]{\boldsymbol{Z}}{}\)) and the observation \(\boldsymbol{\mathit{Y}}\):
\[
 \tensor[_i]{\theta}{} = \frac{\sum_{i=1}^{\tensor[_i]{q}{}}|\rho (\tensor[_{\boldsymbol{\mathit{u}}_i}]{\boldsymbol{Z}}{_{(:,i)}},\boldsymbol{\mathit{Y}})|}{\tensor[_i]{q}{}}
\]
This preselection of weights adjusts the norm of the subvectors \(\tensor[_i]{\boldsymbol{\mathit{u}}}{}\) for their dimension, and at the same time places a greater weight on effects with smaller average `effects'. \\

\section{Simulation study}\label{sec:ss}

Taking into account the aim of the study, we compare different approaches, which can be used for variable selection in the high-dimensional LMM\@.

\subsection{Compared methods}

\begin{description}
 \item[LASSO]~\cite{lasso,buhlgeer} is an established method which can be used for selecting variables in linear regression models.

 In this study we use the LASSO as the reference, as it ignores the LMM data structure (ignores random part of problem). For the LASSO method we use the built-in \texttt{MATLAB} function \texttt{lasso}.\\

 \item[LMMLASSO] from~\cite{schell} is a method based on the minimisation of the non-convex objective function consisting of the \(\ell_1\) penalised negative log-likelihood with respect to the parameter \(\boldsymbol{ \beta}\) from \(\boldsymbol{\mathit{Y}}  \sim \mathcal{N}(\boldsymbol{ X\beta},\boldsymbol{V}(=\boldsymbol{ZDZ}^\mathsf{T} +\boldsymbol{R}))\):

 \[(\hat{\boldsymbol{\beta}},\hat{\boldsymbol{D}},\hat{\boldsymbol{R}} ) =\argmin_{\boldsymbol{\beta},\boldsymbol{D},\boldsymbol{R} } \left[  \frac{1}{2} \log|\boldsymbol{V}| +\frac{1}{2} (\boldsymbol{\mathit{Y}} -\boldsymbol{ X\beta}){}^\mathsf{T}  \boldsymbol{V}^{-1}(\boldsymbol{\mathit{Y}} -\boldsymbol{ X\beta})  +\lambda ‎‎\|\boldsymbol{\beta}\|_1\right] ,\]

 where \(\lambda \) is a fixed parameter. For this method we used the language \texttt{R} package \texttt{lmmlasso}, which uses the coordinate gradient descent algorithm to optimise the objective function.\\

 \item[LASSOP] from~\cite{rohart} is a method based on the log-likelihood of the compound data \((\boldsymbol{\mathit{Y}}^\mathsf{T},\boldsymbol{\mathit{u}}^\mathsf{T}){}^\mathsf{T}\) penalised with the \(\ell_1\) penalisation:

 \begin{multline*}
  (\hat{\boldsymbol{\beta}},\hat{\boldsymbol{D}},\hat{\boldsymbol{R}} ) =\argmin_{\boldsymbol{\beta},\boldsymbol{D},\boldsymbol{R} }  \left[   \log|\boldsymbol{ R}| + (\boldsymbol{\mathit{Y}}-\boldsymbol{X\beta} - \boldsymbol{Z\mathit{u}}){}^\mathsf{T} \boldsymbol{ R}^{-1}(\boldsymbol{\mathit{Y}}-\boldsymbol{ X\beta} -\boldsymbol{Z\mathit{u}}) \right.   \\  \left. +  \log|\boldsymbol{ D}| + \boldsymbol{\mathit{u}}^\mathsf{T} \boldsymbol{ D}^{-1}\boldsymbol{\mathit{u}}  +\lambda ‎‎\|\boldsymbol{ \beta}\|_1 \right] ,
 \end{multline*}

 where \(\lambda \)  is a fixed parameter. The objective function is non-convex, like in LMMLASSO\@. This method is implemented in language \texttt{R} in package \texttt{MMS}. The optimisation problem in this implementation is solved by the adjusted EM algorithm.\\

 \item[LMM-LASSO] from~\cite{Lippert2013} is an approach for LMM with one variance component \(\sigma_D^2 \). The method is suitable for high dimensional data and it is based on a data transformation, which eliminates correlation between observations. We first estimate \(\sigma_D^2 \), \(\sigma^2\) by Maximum Likelihood under the null model, ignoring the effect of variables in matrix \(\boldsymbol{X}\). Let \(\boldsymbol{K}=1/q \cdot \boldsymbol{ZZ}^\mathsf{T} \). Having fixed \(\hat{\gamma} = \hat{\sigma_D^2} / \hat{\sigma^2}\), we use the spectral decomposition of \(\boldsymbol{K} = \boldsymbol{\mathit{u}} \boldsymbol{\Lambda} \boldsymbol{\mathit{u}}^\mathsf{T} \) to rotate our data, so that the covariance matrix becomes isotropic:
 \begin{align*}
  \tilde{\boldsymbol{X}}          & = (\hat{\gamma} \boldsymbol{\Lambda} + \boldsymbol{I} ){}^{-\frac{1}{2}} \boldsymbol{\mathit{u}}^\mathsf{T} \boldsymbol{X}              \\
  \tilde{\boldsymbol{\mathit{Y}}} & =   (\hat{\gamma} \boldsymbol{\Lambda} + \boldsymbol{I} ){}^{-\frac{1}{2}} \boldsymbol{\mathit{u}}^\mathsf{T} \boldsymbol{\mathit{Y}} .
 \end{align*}
 After transforming the data we use the LASSO method
 \[\hat{\beta}= \argmin_{\beta} \left[ \frac{1}{\hat{\sigma^2} }\| \tilde{\boldsymbol{\mathit{Y}}} - \tilde{\boldsymbol{X}} \boldsymbol{\beta} \|^2_2 + \lambda \| \boldsymbol{\beta} \|_1\right] .\]
 We implement this method in \texttt{MATLAB}.\\

 \item[HDLMMnaive \& LMMconvexLASSO] from Section~\ref{sec:int}. Both methods are implemented in \texttt{MATLAB}. For convex optimisation, we use the modelling system for convex optimisation \texttt{CVX}~\cite{cvx} with solver \texttt{Mosek}~\cite{mosek}. In the LMMconvexLASSO approach, we implement the solution to problem (\ref{eq:3}) with weights defined in (\ref{eq:4}). \\
\end{description}

\subsection{Simulation study design}

We compare our two methods step by step with other methods, because methods LMMLASSO and LASSOP  solve a different type of LMM than LMM-LASSO\@. In each comparison, we generate a hundred problems as described in the next parts.

As a correctly solved problem we consider only a problem for which the method gives exactly the set \(S^0\).
All figures show the number of correctly solved problems for all used methods for different numbers of relevant variables (from \(1\) to \(10\) or from \(1\) to \(20\)).
Unless otherwise stated, the elements of the variables (i.e.\ the columns of the design matrix \(\boldsymbol{X}\)) are independently generated from  the uniform \(\left\langle 0,1\right\rangle \) distribution and normalised.\\

First of all (Figure~\ref{fig:obr1}) we compare our methods with  LMMLASSO and LASSOP on high-dimensional data with a `small dimension', because the current implementations of methods LMMLASSO and LASSOP are usually not able to solve problems of dimension higher than \(p=10^3\).

Data in this simulation study are divided into twenty groups of six observations. Together we have \(n=120\) observations. For each observation we observe \(p=150\) variables, but only \(s^0 = \{  1, \dots , 10 \} \) variables influence the observations. Relevant variables are randomly selected from all variables and the effect of relevant variables is one. The effect of other variables is zero. The matrix  \(\boldsymbol{Z}\) captures the group structure of the data. For every group we observe two variables, therefore we consider two variance components and the error variance component. \(\boldsymbol{Z}\) is a block diagonal matrix and \(\boldsymbol{\mathit{u}}\) consist of two parts, each for one variance component. Both parts of the random effects \(\boldsymbol{\mathit{u}}\) are randomly selected from \(\mathcal{N}(0, \boldsymbol{D}  = 2 \cdot \boldsymbol{I})\). Errors are from \(\mathcal{N}(0,\boldsymbol{I})\). This example is inspired by an example from the package \texttt{lmmlasso}.\\

\begin{figure}[h]
 \centering

 \includegraphics[width=.9\textwidth]{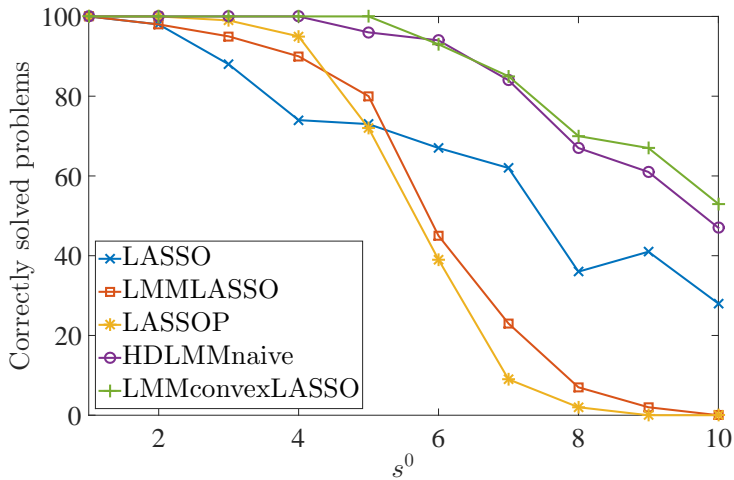}

 \caption{Comparison of ability to find exactly the set \(S^0\). The comparison is performed on high-dimensional data with only a `small dimension', because the methods LMMLASSO and LASSOP can solve only problems with dimension up to \(p = 10^3\).}\label{fig:obr1}
\end{figure}

In Figure~\ref{fig:obr1}, the LASSO is doing surprisingly well. This may be because for every observation there are just two random effects affecting vector \(\boldsymbol{\mathit{Y}}\). The rapid deterioration in the performance of LMMLASSO and LASSOP is in our opinion caused by bad implementation of the methods (Section~\ref{sec:ss}).\\

Second, we compare our methods with the method LMM-LASSO on high-dimensional data with one variance component (see Figure~\ref{fig:obr2}).

In this case we have \(n=200\) observations divided into twenty groups of ten observations. For each observation we observe \(p=5000\) variables, \(s^0 = \{  1, \dots , 20 \} \) all with effect one. The matrix \(\boldsymbol{Z}\) captures the group structure of the data. \(\boldsymbol{Z}_{i,j}\) is \(1\) if the \(i\)-th observation belongs to the \(j\)-th group and \(0\) otherwise. The random effects \(u\) are randomly selected from \(\mathcal{N}(0, \boldsymbol{I})\). Errors are from \(\mathcal{N}(0,0.2\cdot \boldsymbol{I})\). \\

\begin{figure}[h]
 \centering

 \includegraphics[width=.9\textwidth]{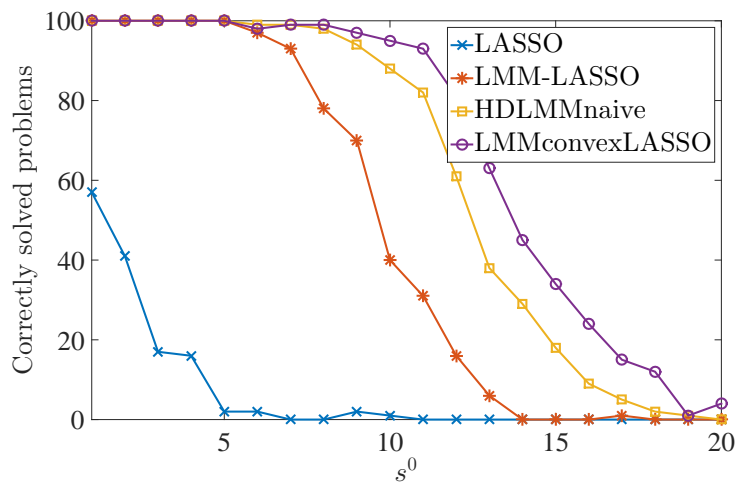}

 \caption{Comparison with the method LMM-LASSO designed for problems with one variance component (apart from the error variance component).}\label{fig:obr2}
\end{figure}

Figure~\ref{fig:obrD2} show a comparison of methods HDLMMnaive, LMMconvexLASSO (version (\ref{eq:3})) and LMMconvexLASSO (version (\ref{eq:W}) with \(\boldsymbol{W} = \boldsymbol{D}^{-1}\)) in similar scenarios but with greater correlation in matrix \(\boldsymbol{D}\).
\begin{itemize}
 \item[\(\bullet \)] In the first case \(\boldsymbol{D}\) is the identity matrix.
 \item[\(\bullet \)] In the second case it is diagonal with the first half of elements equal to \(2\) and the second half \(0.8\).
 \item[\(\bullet \)] In the third case the main diagonal is the same as in the second case, but on the first diagonal below and above the main diagonal we put \(0.9\).
 \item[\(\bullet \)] In the fourth case the main diagonal is the same as in the second case, but on the first three diagonals below and above the main diagonal we put \(0.9\), \(0.8\) and \(0.7\).
 \item[\(\bullet \)] In the fifth case the main diagonal is the same as in the second case but both block matrices are filled with \(0.8\).
\end{itemize}
Moreover we observe \(n = 200\) observations with \(p=5000\) regressors, but only \(s_0 = 10\) of them are relevant for observations \(\boldsymbol{\mathit{Y}}\). The number of predictors is \(q = 40\) and the variance of errors is \(0.2\).

\begin{figure}[h]
 \centering

 \includegraphics[width=.9\textwidth]{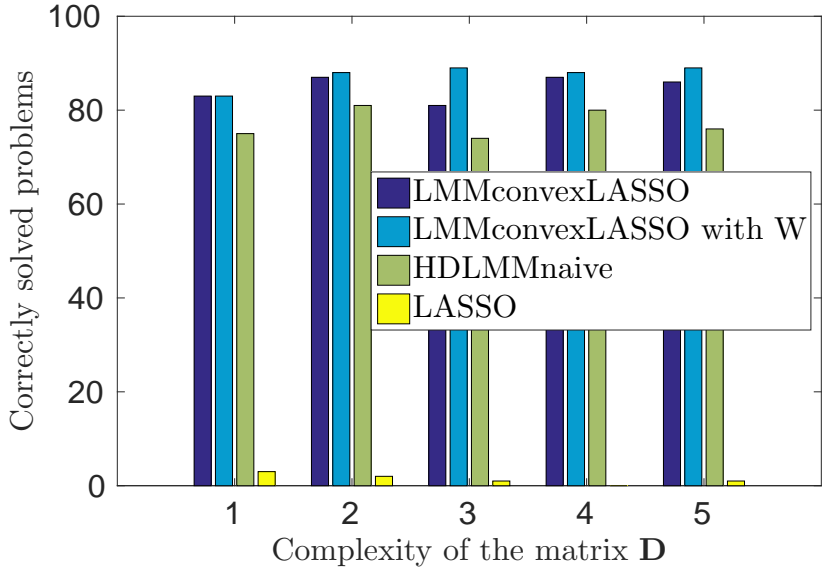}

 \caption{Comparison of LMMconvexLASSO (version (\ref{eq:3}) and version (\ref{eq:W}) with \(\boldsymbol{W}= \boldsymbol{D}^{-1}\)) and HDLMMnaive approaches in five cases of growing complexity of the matrix \(\boldsymbol{D}\), from diagonal to block diagonal matrix with dense block matrices.}\label{fig:obrD2}
\end{figure}

\newpage

Finally, Figures~\ref{fig:obr3} and~\ref{fig:obr4} show a comparison of all four our methods: HDLMMnaive and LMMconvexLASSO in three different formulations as given in eq. (\ref{eq:1}), eq. (\ref{eq:2}) and eq. (\ref{eq:3}). Two versions of data are used.
\(n=200\) observations are divided into twenty groups of ten observations. \(p=10^4\) variables, \(s^0 = \{  1, \dots , 10 \} \) variables influence the observations with effect one. \(\boldsymbol{Z}\) is a block diagonal matrix, \(\boldsymbol{D}\) is a diagonal matrix and \(\boldsymbol{\mathit{u}}\) consists of three parts, each for one variance component.\\

We consider two scenarios. In the first one, (Figure~\ref{fig:obr3}), the variance components are \(1\), \(1.2\), \(0.8\) and the error variance component is \(0.1\).

\newpage

\begin{figure}[h]
 \centering

 \includegraphics[width=.9\textwidth]{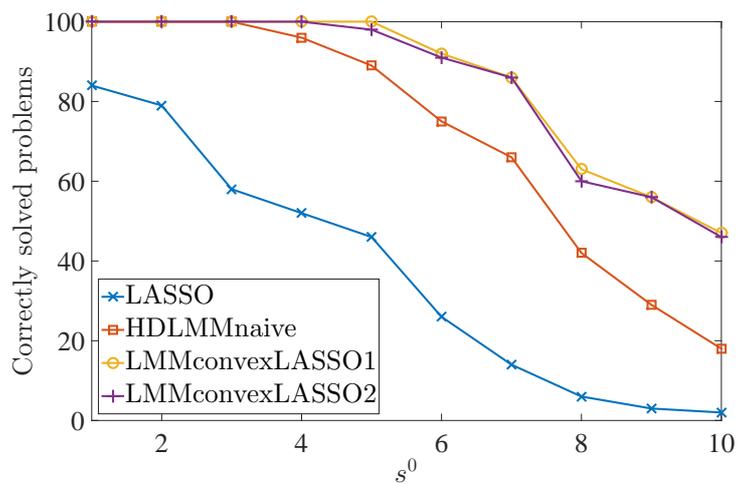}

 \caption{Comparison of different approaches to penalisation described in Section~\ref{sec:int} on data with a small difference between variance components. LMMconvexLASSO2 is the approach with one penalisation parameter and LMMconvexLASSO1 is the multi-parameter approach. In this case we omit the weighted approach because it gives the same results as the multi-parameter approach.}\label{fig:obr3}
\end{figure}

In the second scenario, (Figure~\ref{fig:obr4}), the variance components are \(2\), \(4\), \(0.5\) and the error variance component is \(0.1\).

\begin{figure}[h]
 \centering

 \includegraphics[width=.9\textwidth]{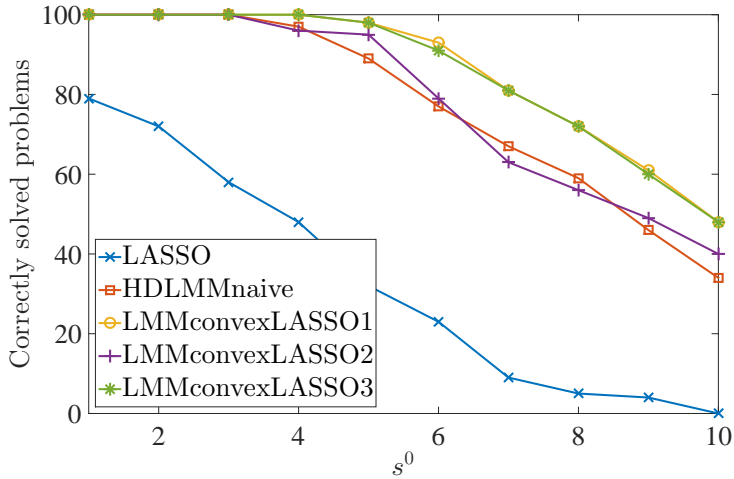}

 \caption{Comparison of different approaches to penalisation described in Section~\ref{sec:int} on data with a greater difference between variance components. LMMconvexLASSO2 is the approach with one penalisation parameter, LMMconvexLASSO1 is the multi-parameter approach and LMMconvexLASSO3 is the weighted approach.}\label{fig:obr4}
\end{figure}

\newpage

\subsection{Discussion}

The models in our simulation studies are essentially quite simple, and therefore the absolute results are very good, especially for smaller \(s^0\). However, the comparison of individual methods is more important. The success of our methods can be attributed to the fact that they do not aim to estimate the matrix \(\boldsymbol{D}\) and directly estimate the vector \(\boldsymbol{\mathit{u}}\). They avoid errors from double estimating, which arise when one first estimates the matrix \(\boldsymbol{D}\) and subsequently estimates the vector \(\boldsymbol{\mathit{u}}\) based on the estimate of \(\boldsymbol{D}\). A possible disadvantage of our methods is that they do not provide a direct estimate of matrix \(\boldsymbol{D}\). However, our main goal is dealing with the high-dimensional matrix \(\boldsymbol{X}\),  not the estimation of matrix \(\boldsymbol{D}\).

Both new methods work well. The more complex method LMMconvexLASSO performs better than HDLMMnaive in almost all cases, but HDLMMnaive is faster and it can handle more variables.

At the same time, we can notice a difference between Figures~\ref{fig:obr3} and~\ref{fig:obr4}, where it turns out that for a small difference between variance components, one penalisation parameter is enough. In contrast, if the differences between variance components are greater, then the version with more penalisation parameters performs better than the version with one penalisation parameter.  The weighted version with one penalisation parameter performs almost identically to the version with more penalisation parameters in both cases.

The other great advantage of our approach is convexity, and the possibility to use established and well working software for optimisation.

We may notice that with the growing size of the random effect vector \(\boldsymbol{u}\), our methods fail more often because more variables also mean more `freedom' in the optimisation process, which in turn implies `worse performance'. This leads us to the problem `What to do if \(q > n\)'.\\

\section{High-dimensional matrix \(\boldsymbol{Z}\) (\(q>n\))}\label{sec:velq}

Often, we encounter an LMM that has dimension \(q\) of the matrix \(\boldsymbol{Z}\) greater than the number of observations \(n\). At the time of writing, we are not aware of any approach that could handle a high-dimensional matrix \(\boldsymbol{X}\) as well as a high-dimensional matrix \(\boldsymbol{Z}\). In this section we introduce a two step approach for this type of LMMs.

We start with an example. Suppose we observe the growth of plants in different places on Earth and we investigate how the genetic information, weather and the composition of the soil influence the growth. We capture genetic information in the matrix \(\boldsymbol{X}\). A lot of genetic information can be redundant and our aim is to select the relevant genetic information.  The soil can contain various substances and we also have a long time series of daily weather. All this information can be captured in the matrix \(\boldsymbol{Z}\). In this setting, the number of random variables \(q\) is greater than the number of observed plants \(n\). We assume that the effects of random variables is normally distributed with two variance components, one for weather and one for soil. Our model is an LMM with high-dimensional matrices \(\boldsymbol{X}\) and \(\boldsymbol{Z}\).

Our approach consists of two steps. In the first step, we reduce the dimension \(q\) of the matrix \(\boldsymbol{Z}\) by creating new variables as linear combinations of the original ones. In our example, we would create new soil and weather type variables. With the smaller number of soil and weather types, we may not be able to capture all the soil and weather information exactly. In our simulation study we set the number of new variables so that they capture at least \(95\% \) the original soil and weather types. For creating the new variables, we use PCA (principal component analysis)~\cite{wold1987principal,jolliffe2002principal}, but it is possible to use any of the existing methods for dimension reduction. In all cases in our simulation study, the number of new variables was smaller than the number of observations and in the second step, we can use the LMMconvexLASSO\@. It is clear that the effect of the new variables is also normally distributed and that the use of LMM methods is legitimate.\\

\subsection{Simulation study}\label{sec:ss1}

We proceed almost like in Section~\ref{sec:ss}.  We generate matrices \(\boldsymbol{X}\) with dimensions \(n=200\), \(p=2000\). We create fifty different types of soil, each as combination of \(200\) substances and we create twenty types of weather, each as a \(200\) day long  time series holding the number of minutes that the sun was shining. The matrix \(\boldsymbol{Z}\) was created as a random combination of weather type and soil type (\(q=400\), \(200\) variables for weather and \(200\) variables for soil).  Only \(s^0 = \{  1, \dots , 10 \} \) variables from matrix \(\boldsymbol{X}\) influence the observations with effect one and both parts of the random effects \(\boldsymbol{\mathit{u}}\) are randomly selected from \(\mathcal{N}(0, \boldsymbol{D}  =  \boldsymbol{I})\). Errors are from \(\mathcal{N}(0, 0.2 \cdot \boldsymbol{I})\).

Figure~\ref{fig:obr5} compares LASSO (which ignores the LMM data structure) and the approach described in the previous section.

\begin{figure}[h]
 \centering

 \includegraphics[width=.9\textwidth]{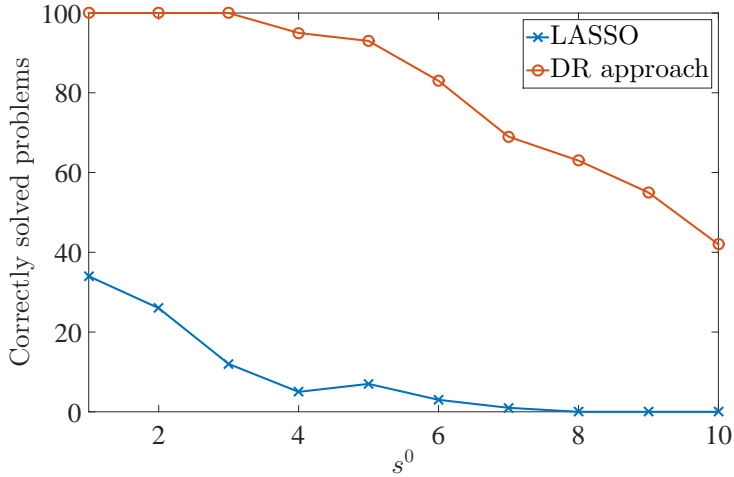}

 \caption{Comparison of LASSO (which ignores the LMM data structure) with our dimension reduction (DR) approach on data with more random variables \(q\) than observations \(n\).}\label{fig:obr5}
\end{figure}

\subsection{Discussion}

As can be seen in Figure~\ref{fig:obr5}, the proposed approach works very well in our simple simulation study, while the LASSO completely blew up. The main reason why the LASSO did not work so well was that it ignored a lot of random effect.

The average number of new variables was \(43.4\) for soil and \(16.2\) for weather. After the phase of variable selection, either the model with original variables or the model with new variables can be estimated as a standard LMM, for example via Henderson's mixed model equation. \\

\section{Conclusion}\label{sec:end}

In Section~\ref{sec:int}, we introduced two new methods for variable selection in high-dimensional LMMs, and in Section~\ref{sec:vah}, we designed a weighting which may replace the searching through all possible parameter combinations. The greatest advantage of our methods is convexity and the associated ability to handle high-dimensional data with dimension up to \(10^5\) in the case of LMMconvexLASSO\@.

In Section~\ref{sec:ss}, we compared our methods LMMconvexLASSO with weights from Section~\ref{sec:vah} and HDLMMnaive with other existing methods. As can be seen in Figures~\ref{fig:obr1} and~\ref{fig:obr2}, our methods always perform better than the other methods in our simulation study. From our comparisons, it seems that the more complex method LMMconvexLASSO performs better than the method HDLMMnaive (see Figures~\ref{fig:obr3} and~\ref{fig:obr4}).

We also show in Section~\ref{sec:sc} that the introduced methods are consistent. With a sufficient number of observations, the simpler methods (\ref{eq:1}), (\ref{eq:2}), (\ref{eq:3}) are capable of finding the true set of relevant regressors.

Section~\ref{sec:ss} also shows that if \(q<n\) it might be sufficient for the purpose of variable selection to consider LMM as classical linear regression (random vector effects \(\boldsymbol{\mathit{u}}\) are considered as fixed). Alternatively, as shown in Figure~\ref{fig:obrD2}, the rescaling of different parts of \(\boldsymbol{\mathit{u}}\) (method (\ref{eq:3})) can be sufficient. The use of the more complex method (\ref{eq:W}) has only minor positive effect with respect to the computationally more effective method (\ref{eq:3}), especially if the required variance-covariance components used to derive the weighing matrix are totally unknown and should be estimated from the given data.

Both of our methods are not suitable for in cases when the dimension \(q\) of matrix \(\boldsymbol{Z}\) is greater than the number of observations \(n\), but in Section~\ref{sec:velq} we proposed an approach for these cases. As shown in the simulation study of Section~\ref{sec:ss1}, this approach works relatively well.

\texttt{MATLAB} source codes for both of our methods are available on \url{http://www.mathworks.com/matlabcentral/fileexchange/56952-lmmconvexlasso}.\\

\begin{acknowledgements}
 The~work was supported by~the~Slovak Research and Development Agency, project APVV--15--0295, and by~the~Scientific Grant Agency VEGA of the Ministry of Education of the Slovak Republic and the Slovak Academy of~Sciences, by~the~projects VEGA 2/0047/15 and VEGA 2/0011/16.
\end{acknowledgements}

\bibliographystyle{spmpsci}      
\bibliography{ref}
\end{document}